\newtheorem{defn}{Definition}
\newtheorem{thm}{Theorem}
\newtheorem{prop}{Proposition}
\newtheorem{cor}{Corollary}
\newtheorem*{question*}{Question}
\newtheorem*{answer*}{Answer}
\newtheorem*{solution*}{Solution}
\newtheorem*{nextstep*}{Next Step}
\newtheorem*{issue*}{Issue}
\newcommand{\rref}[2][]{\prettyref{#2}}
\newcommand{\Var}{\textrm{Var}}
\newcommand{\Unif}{\mathsf{Unif}}
\DeclareMathOperator\arctanh{arctanh}
\DeclareMathOperator{\pa}{pa}
\DeclareMathOperator{\ch}{ch}
\newcommand{\bbR}{\mathbb{R}}
\newcommand{\bbP}{\mathbb{P}}
\newcommand{\bbX}{\mathbb{X}}
\newcommand{\hSigma}{\hat{\Sigma}}
\newcommand{\hsigma}{\hat{\sigma}}
\newcommand{\cG}{\mathcal{G}}
\newcommand{\cN}{\mathcal{N}}
\newcommand{\cO}{\mathcal{O}}
\newcommand{\sidak}{\textrm{sidak}}
\newcommand{\rank}{{\textrm{rank}}}
\title[Learning Latent Factor Causal Models]{Causal Structure Discovery between Clusters \\of Nodes Induced by Latent Factors}
\begin{document}

\maketitle

\begin{abstract}
  We consider the problem of learning the structure of a causal directed acyclic graph (DAG) model in the presence of latent variables.
  We define \textit{latent factor causal models} (LFCMs) as a  restriction on causal DAG models with latent variables, which are composed of clusters of observed variables that share the same latent parent and connections between these clusters given by edges pointing from the observed variables to latent variables.
  LFCMs are motivated by gene regulatory networks, where regulatory edges, corresponding to transcription factors, connect spatially clustered genes.
  We show identifiability results on this model and design a consistent three-stage algorithm that discovers clusters of observed nodes, a partial ordering over clusters, and finally, the entire structure over both observed and latent nodes. 
  We evaluate our method in a synthetic setting, demonstrating its ability to almost perfectly recover the ground truth clustering even at relatively low sample sizes, as well as the ability to recover a significant number of the edges from observed variables to latent factors. 
  Finally, we apply our method in a semi-synthetic setting to protein mass spectrometry data with a known ground truth network, and achieve almost perfect recovery of the ground truth variable clusters.
\end{abstract}

\begin{keywords}%
  Causal discovery, causal structure learning, causal identifiability, latent factor model%
\end{keywords}

\section{Introduction}\label{sec:intro}

Structural causal models are valuable tools for reasoning about decision-making, and as a result, have been widely adopted across fields such as genomics \citep{friedman2000using}, econometrics \citep{blalock2017causal}, and epidemiology \citep{robins2000marginal}.
To use causal models when the causal structure is not known \textit{a priori}, it is necessary to learn the model from observed data, a task known as \textit{causal structure learning} \citep{heinze2018causal}.
As a field, causal structure learning has recently experienced major developments and remains an active and widespread area of research.
Recent works aim to address a number of challenges inherent to the problem of learning causal structure, such as the presence of unobserved confounders \citep{cai2019triad,frot2019robust,bernstein2020ordering}, the large search space over causal models \citep{chickering2002optimal,solus2021consistency}, identifiability of the underlying causal model \citep{shimizu2006linear,peters2014identifiability}, and statistical issues stemming from high-dimensional datasets \citep{nandy2018high}.
We focus on a setting which exhibits all of these challenges, and our proposed method addresses each of these challenges in a cohesive way.
We devote particular attention to the issue of unobserved confounders, 

A number of methods have been proposed to address the challenge of learning causal models in the presence of unobserved confounders.
These methods fall into two general categories.
First, some methods account for unobserved confounders by learning a graphical model over only the observed variables, albeit from a \textit{different} class of graphical models \citep{richardson2002ancestral,bernstein2020ordering}. 
However, in some cases, such as the one explored in this paper, it is possible to learn a graph over \textit{both} the observed and latent variables. 
Existing methods \citep{silva2006learning,kummerfeld2016causal,xie2020generalized,agrawal2021decamfounder} that seek to recover these structures often assume that the latent variables are \textit{exogenous}, i.e., are not caused by any of the observed variables.
However, this assumption is often violated in many applications.
For example, in genomics, gene regulatory networks are often modeled using \textit{transcription factories} \citep{stadhouders2019transcription} as underlying latent variables with gene expression as the observable variables.
These gene expressions then can have downstream impacts on other transcription factories, requiring a model that allows non-exogenous latent variables. 

\textbf{Contributions.} In \rref{sec:setup}, we introduce the class of \textit{latent factor causal models} (LFCMs), which allow for non-exogenous latent variables.
Similar to prior work, this class of models prohibits direct edges between observed variables, i.e., the effect of one observed variable on another must be mediated by some latent variable.
We likewise prohibit direct edges between latent variables, inducing a \textit{bipartite} structure over the graph.
Furthermore, we require that the latent variables \textit{cluster} the observed variables, i.e., each observed variable has only a single latent parent, and that each latent variable has at least three observed children.
These constraints on the model are motivated by how the DNA is organized in the cell nucleus to facilitate cell-type specific gene expression. The spatial clustering of genes in the cell nucleus facilitates their co-regulation by transcription factors~\citep{Belyaeva, Shiva_review}. The expression of each gene represents the observed variables, the spatial clustering of genes is unobserved, 
and the latent factors represent the presence of transcription factors that can e.g.~turn on the expression of the co-clustered genes.
%

In \rref{sec:identifiability}, we establish identifiability results for LFCMs, based primarily on the tetrad representation theorem of \cite{spirtes2013calculation}. 
Based on our identifiability results, in \rref{sec:methods} we propose a constraint-based method for learning the underlying graph over \textit{both} latent and observed variables.
The proposed method has three stages.
In the first stage, our method identifies clusters of observed variables with the same latent parent, as well as an ordering over these clusters.
The second stage merges clusters from the first stage if necessary.
In the third stage, we learn edges from the observed variables to the latent variables, by testing for conditional independence with all children of each latent variable.
For each stage, the constraints being checked are equivalent to multiple test statistics vanishing simultaneously, requiring the use of multiple hypothesis testing procedures which we describe in \rref{subsec:implementation-details}.
Finally, in \rref{sec:empirical}, we demonstrate the performance of our algorithm in both a completely synthetic and a semi-synthetic setting.
In particular, we show that our method is capable of recovering the ground truth clustering with nearly 100\% accuracy even at relatively low sample sizes.
Our method also recovers the ground truth edges between observed nodes and latent nodes with higher accuracy than a baseline which does not make use of multiple hypothesis testing.
\subsection{Related Work}\label{sec:related}
\textbf{Learning undirected graphical models with clusters.} Since clusters of correlated variables are common across many disciplines, including biology \citep{eisen1998cluster}, economics \citep{bai2016econometric}, neuroscience \citep{arslan2018human,pircalabelu2020community}, and the behavioral sciences \citep{van2013handbook}, several structure learning methods have been developed which encourage clustering in the estimated graphs, especially in the setting of \textit{undirected} graphical models.
For example, \cite{tan2015cluster} introduced the \textit{cluster graphical lasso} method, which generalizes the traditional graphical lasso method to allow for the incorporation of known clustering information, resulting in denser estimated subgraphs over these clusters.
Building on this work, \cite{hosseini2016learning} introduce the \textit{GRAB} algorithm, which does \textit{not} require clusters to be known beforehand, but instead allows the clustering to be learned simultaneously to network structure.
More recently, \cite{pircalabelu2020community} introduced \textit{ComGGL}, a method which also learns clusters and graph structure simultaneously, with the additional benefit of high-dimensional consistency guarantees for both cluster recovery and graph structure in sparse settings.

\textbf{Latent tree models and factor analysis.} Unlike in the undirected settings above, in our setting, the clusters of observed variables are explicitly assumed to be induced by latent variables.
As has been observed in previous works, especially in latent tree modeling \citep{choi2011learning,shiers2016correlation,drton2017marginal,leung2018algebraic} and factor analysis \citep{drton2007algebraic,kummerfeld2016causal}, these latent variables produce ``signatures" or ``invariants" in the distribution over the observed variables, which can be exploited for structure learning.
One invariant which plays an important role in both settings is the \textit{tetrad} $t_{ij,uv}$, a $2\times 2$ subdeterminant of the correlation matrix which must \textit{vanish} (i.e., equal zero) whenever $i$ and $j$ share a single common latent parent, but have no children.
As we will see in \rref{sec:identifiability}, despite the differences in our model assumptions, tetrads also play an important role in our algorithm when identifying causal clusters.

\textbf{Traditional causal discovery methods.} Within the space of discovering causal models on observational data, there are two categorizations of algorithms. 
First, there are constraint-based methods, which rely on conditional independence testing to draw conclusions about the structure. 
The well-known PC-algorithm assumes \textit{causal sufficiency}, which bars unmeasured common cause latent variables and selection variables \citep{spirtes2000causation}. 
There also exist constraint-based methods on directed acyclic graphs with latent and selection variables, such as FCI, RFCI and their variants \citep{spirtes2001anytime, colombo2012learning}. 
These methods all learn Markov equivalence classes of directed acyclic graphs, as represented by completed partially directed acyclic graphs (CPDAGs) in the PC-algorithm or partial ancestral graphs (PAGs) in the FCI algorithm. 
The second categorization of methods is score-based algorithms, such as GES \citep{chickering2002optimal}, which identify underlying structure by optimizing a well-designed score function.
These methods, even those that are asymptotically correct in the presence of latent confounders, output equivalence classes of DAGs. In our work, we try to recover more complete causal information.

\textbf{Learning causal models with latent variables.}
Existing work for structure recovery in the presence of latent variables can often by characterized by the model structures that the method performs well or poorly upon. 
\cite{agrawal2021decamfounder} considers the model in which latent variables are \textit{pervasive}, influencing many observed nodes. 
Their method, \textit{DeCAMFounder}, recovers the true causal structure over observed variables by applying spectral decomposition in the non-linear additive noise and pervasive confounding setting \citep{agrawal2021decamfounder}, extending the linear setting of \cite{frot2019robust}.
A number of methods, similarly to the current work, also rely on algebraic constraints in the covariance matrix over observable variables to infer graph structure over latent variables.
The BPC \citep{silva2006learning} and FOFC \citep{kummerfeld2016causal} algorithms both leverage rank constraints on the covariance matrix to cluster observed variables, then recover some structure over the inferred latent nodes corresponding to these clusters.
Other algorithms, such as those proposed by \cite{shimizu2009estimation}, \cite{cai2019triad}, and \cite{xie2020generalized}, attempt to improve upon previous algorithms by restricting models to the \textit{linear non-Gaussian} case. 
In our work, we do not require non-Gaussianity, instead working in the general linear acyclic model regime. 
Furthermore, all of the above algorithms rely on the \textit{measurement assumption}, which requires that no observed variable is the parent of any latent variable.
This assumption, however, is not satisfied in many real-world applications of graphical models with latent variables, and in our work, we attempt to recover causal structures \textit{without} the measurement assumption. 

\section{Problem Setup}\label{sec:setup}
\begin{figure}
    \centering
    \includegraphics[width=\textwidth]{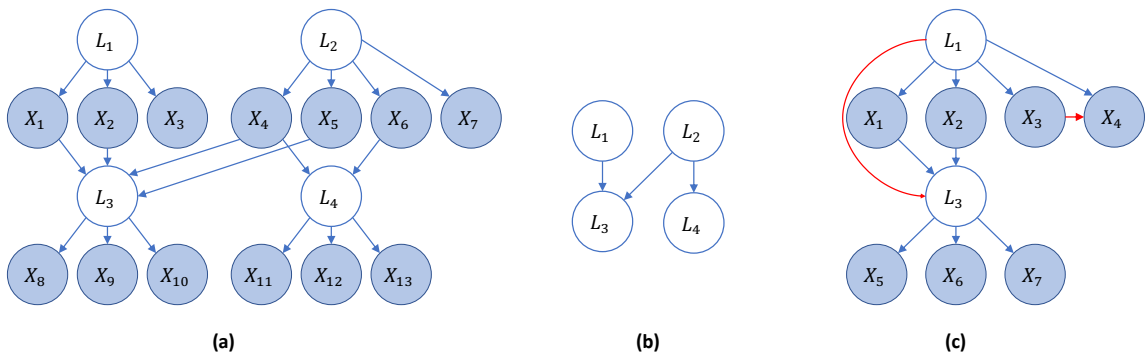}
    \caption{
    \textbf{(a)} $\mathcal{G}$ satisfies our model constraints.
    \textbf{(b)} The latent graph $L(\mathcal{G})$ for $\mathcal{G}$.
    \textbf{(c)} $\cG'$ falls outside of the class of models we consider in this paper, with violations shown in red.}
    \label{fig:model}
\end{figure}

We now formally define the class of models considered in this paper.
A \textit{structural causal model} (SCM) over the variables $\{ X_i \}_{i=1}^p$ consists of a set of \textit{structural assignments} of the form $X_i = f_i(X_{\pa(i)}, \epsilon_i)$, and a product distribution $\bbP_\epsilon$ over mean-zero \textit{exogenous noise} terms $\{ \epsilon_i \}_{i=1}^p$.
The set $\pa(X_i)$ are called the \textit{parents} of $X_i$, and the \textit{causal graph} for the SCM is a graph with nodes $\{ X_i \}_{i = 1}^p$ and directed edges $X_j \to X_i$ for $X_j \in \pa(X_i)$.
We assume that the causal graph for the SCM is \textit{acyclic}, in which case the distribution $\bbP_\epsilon$ induces a unique distribution $\bbP_X$ over $\{ X_i \}_{i=1}^p$.

In this paper, we focus on a class of SCMs with restrictions on both the structural assignments and on the causal graph.
First, we assume each $f_i$ is a \textit{linear} function, a common starting point for new methods, which has been the setting of many works \citep{chickering2002optimal,hauser2012characterization,solus2021consistency}.
Second, we assume that the causal graph is of the following form.
\begin{defn}\label{defn:latent-clustered-model}
Let $\cG$ be a DAG over latent nodes $L_1, \ldots, L_K$ and \emph{observed nodes} $X = \{X_1, \ldots, X_p\}$.
The \emph{clusters} of $\cG$ are the sets $C_k = \ch(L_k)$ for $k = 1, \ldots, K$.
The \emph{latent graph} for $\cG$, denoted $L(\cG)$, is the graph over $\{ L_k \}_{k=1}^K$ with an edge $k \to k'$ if and only if $X_k \to L_{k'}$ for some $X_k \in \ch(L_k)$.
$\cG$ is called a \emph{latent factor causal model} (LFCM) if it satisfies the following conditions:
\begin{itemize}
    \item[(a)][Unique cluster assumption] Each observed node has exactly one latent parent.
    
    \item[(b)][Bipartite assumption] There are no edges between pairs of observed nodes or between pairs of latent nodes.
    
    \item[(c)][Triple-child assumption] Each latent node has at least 3 observed children.
    
    \item[(c)][Double-parent assumption] If $k \to k'$ in $L(\cG)$, then there exist two nodes $X_i, X_j \in \ch(L_k)$ such that $X_i \to L_{k'}$ and $X_j \to L_{k'}$.
\end{itemize}
\end{defn}

See \rref{fig:model}a~for an example of a graph that satisfies our model definition, and \rref{fig:model}c~for a graph that does not.
The importance of each assumption for the purpose of identifying $\cG$ will become clear in the proofs of the genericity and identifiability results presented in the next section.
For example, we will see that the edge $X_3 \to X_4$ in \rref{fig:model}c prevents the submatrix $\Sigma_{[1,2],[3,4]}$ of the covariance matrix $\Sigma$ from being low rank, and thus prevents $\{ X_1, X_2 \}$ and $\{ X_3, X_4 \}$ from being clustered.

\section{Trek separation and genericity assumptions}\label{sec:identifiability}
In this section, we review fundamental results essential to the identifiability of LFCMs, which we constructively prove in \rref{sec:methods} by introducing an algorithm for consistently estimating LFCMs.
We will also introduce genericity assumptions necessary for the consistency of our algorithm.

We denote the covariance matrix of our model as $\Sigma$, and given two subsets of nodes $A, B$, we use $\Sigma_{A,B}$ to denote the submatrix of $\Sigma$ with rows in $A$ and columns in $B$.
Our identifiability results rely on a common generalization of d-separation, known as \textit{trek separation}, which relates the causal graph of a SCM to the rank of submatrices of $\Sigma$.
A \textit{directed path} from node $i$ to node $j$ is a sequence of nodes $p_1 = i, ..., p_k = j$, such that $p_i \to p_{i+1}$ for all $i$ from $1$ to $k-1$.
In this case, $j$ is called the \textit{sink} of the path and $i$ is called the \textit{source}.
A \textit{trek} in the graph $\mathcal{G}$ from $i$ to $j$ is an ordered pair of directed paths $(P_1, P_2)$, such that the sink of $P_1$ is $i$ and the sink of $P_2$ is $j,$ and $P_1, P_2$ share a source $k.$
%
%
Now, we define \textit{trek separation}. 
Given four subsets $A, B, C_A, C_B$ of nodes (note these subsets need not be disjoint), the pair $(C_A, C_B)$ \textit{t-separates} $A$ and $B$ if, for every trek $(P_1, P_2)$ between $A$ and $B,$ $P_1$ contains a node in $C_A$ or $P_2$ contains a node from $C_B.$ 
Finally, the following theorem relates the notion of \textit{t-separation} to the rank of submatrices of the covariance matrix.
\begin{thm}[Trek separation, \cite{sullivant2010trek}]\label{thm:trek-separation}
Let $A, B$ be two subset of nodes in $\cG$.
Then
$$
\rank(\Sigma_{A,B}) \leq \min \{|C_A| + |C_B|: (C_A, C_B)~\text{t-separates }A \text{ from }B \text{ in } \cG \}
$$
Moreover, equality holds generically\footnote{We say a statement holds \textit{generically} if the set of parameters for which it does not hold has Lebesgue measure zero.} for $\Sigma$ consistent with $\cG$.
\end{thm}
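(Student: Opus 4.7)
The plan is to establish the rank inequality constructively via the trek rule, then prove genericity through combinatorial arguments on vertex-disjoint trek systems.

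For the first step, I recall that a linear SEM with coefficient matrix $B$ and diagonal noise covariance $\Omega$ gives $\Sigma = (I-B)^{-1}\Omega(I-B)^{-T}$, and the $(i,c)$ entry of $(I-B)^{-1}$ equals $P_{i,c} := \sum_P \prod_{e \in P} B_e$, summed over directed paths $P$ from $c$ to $i$. This yields the trek rule
\begin{equation*}
\Sigma_{ij} \;=\; \sum_{c \in V} \Omega_{cc}\, P_{i,c}\, P_{j,c},
\end{equation*}
which expresses each entry as a weighted sum over treks from $i$ to $j$.

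For the upper bound, fix a t-separator $(C_A, C_B)$, and partition the vertex set $V = V_A \sqcup V_B$ where $c \in V_A$ if every directed path from $c$ to every node of $A$ meets $C_A$, and $c \in V_B$ otherwise. A short argument shows that for every $c \in V_B$, all directed paths from $c$ to nodes of $B$ meet $C_B$: otherwise one could concatenate an unblocked path from $c$ to some $i \in A$ with an unblocked path from $c$ to some $j \in B$, producing a trek not t-separated by $(C_A,C_B)$. For $c \in V_A$ and $i \in A$, decompose $P_{i,c}$ by the last vertex of $C_A$ visited along the path, writing $P_{i,c} = \sum_{c' \in C_A} P_{c',c}\, Q^A_{i,c'}$, where $Q^A_{i,c'}$ sums weights of paths from $c'$ to $i$ that avoid $C_A$ after $c'$; apply the analogous decomposition to $P_{j,c}$ on the $B$-side for $c \in V_B$. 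Substituting into the trek rule and collecting terms exhibits $\Sigma_{A,B}$ as a sum of two matrix products of inner dimensions $|C_A|$ and $|C_B|$, giving $\rank(\Sigma_{A,B}) \leq |C_A| + |C_B|$; minimizing over t-separators yields the stated inequality.

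For genericity of equality, let $r$ denote the minimum of $|C_A|+|C_B|$ over all t-separators $(C_A,C_B)$. I would invoke a Menger-type duality for treks to produce subsets $A' \subseteq A$ and $B' \subseteq B$ with $|A'| = |B'| = r$ admitting a system of $r$ pairwise vertex-disjoint treks from $A'$ to $B'$. The $r \times r$ minor $\det(\Sigma_{A',B'})$ is a polynomial in the entries of $B$ and $\Omega$; expanding each factor via the trek rule writes it as a signed sum over $r$-tuples of treks, and a Lindstr\"om--Gessel--Viennot--type sign-reversing involution pairs up all non-disjoint $r$-tuples and cancels their contributions. What remains is the nonzero monomial arising from the chosen vertex-disjoint system, so the minor is a nonzero polynomial in the parameters and hence vanishes only on a Lebesgue-measure-zero set.

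The hard part is the genericity half. A trek is an ordered pair of directed paths sharing an interior source rather than a single directed path, so neither the standard Menger min-cut argument nor the classical LGV lemma applies directly. Constructing the sign-reversing involution that swaps the shared portions of two crossing treks, while respecting the ordering of each trek's two ``legs'' and delivering consistent signs, is the key technical step and the main source of difficulty.
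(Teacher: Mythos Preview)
The paper does not prove this theorem; it is quoted verbatim from \cite{sullivant2010trek} and used as a black box, so there is no ``paper's own proof'' to compare against. Your sketch is therefore not redundant with anything in the paper, and in fact it tracks the original Sullivant--Talaska--Draisma argument fairly closely.

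Your upper-bound argument is correct: partitioning sources by whether all their paths to $A$ hit $C_A$, then decomposing each path-sum by its last visit to the separator, does exhibit $\Sigma_{A,B}$ as a sum of two low-rank pieces. For the genericity half, you have correctly located the real obstacle. The original proof does \emph{not} build the sign-reversing involution directly on trek systems as you propose; instead it constructs an auxiliary DAG $H$ (essentially two copies of $\cG$ glued along a source layer) in which treks of $\cG$ correspond bijectively to ordinary directed paths in $H$, and t-separators $(C_A,C_B)$ correspond to ordinary vertex cuts. One can then invoke the classical Menger theorem and the standard Lindstr\"om--Gessel--Viennot lemma in $H$ without modification, sidestepping the two-legged bookkeeping you flag as the ``hard part.'' Your direct-involution plan is not wrong in principle, but it is strictly harder to execute than the auxiliary-graph reduction, and you have not yet specified the involution; if you pursue this route you will end up reinventing that reduction in disguise.
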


In this paper, we only need to use information about the rank of 2 x 2 submatrices of $\Sigma$.
The determinants of these matrices are commonly known as \textit{tetrads}.
In particular, we denote $t_{ij,uv} = \det(\Sigma_{[ij] , [uv]}) = \Sigma_{i u} \Sigma_{j v} - \Sigma_{i v} \Sigma_{j u}$.
Specializing \rref{thm:trek-separation}, we obtain the following corollary:
\begin{cor}[Tetrad representation, \cite{spirtes2013calculation}]\label{corollary:tetrad-representation}
    Suppose $A = \{ X_i, X_j \}$ and $B = \{ X_u, X_v \}$ are t-separated by a single node.
    Then $t_{ij,uv} = 0$.
\end{cor}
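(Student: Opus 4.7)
The plan is to derive the corollary as a one-line specialization of Theorem~1. First I would pin down the hypothesis: ``$A$ and $B$ are t-separated by a single node'' I read as the existence of a t-separating pair $(C_A, C_B)$ with $|C_A| + |C_B| = 1$, i.e.\ either $(\{c\}, \emptyset)$ or $(\emptyset, \{c\})$ t-separates $A$ from $B$ for some node $c$. (Concretely, $c$ lies on $P_1$ or on $P_2$ for every trek $(P_1,P_2)$ between $A$ and $B$, placed consistently on the $A$-side or the $B$-side.)

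Next I would plug this witness into the rank bound of Theorem~1. Taking the minimum in the statement of the theorem over all t-separators and using the specific witness above gives
\[
\rank(\Sigma_{A,B}) \;\leq\; |C_A| + |C_B| \;=\; 1.
\]
Since $A = \{X_i, X_j\}$ and $B = \{X_u, X_v\}$, the submatrix $\Sigma_{A,B}$ is $2\times 2$, and any $2\times 2$ matrix of rank at most $1$ has vanishing determinant. By the definition of the tetrad recorded immediately before the corollary, that determinant is precisely $t_{ij,uv} = \Sigma_{iu}\Sigma_{jv} - \Sigma_{iv}\Sigma_{ju}$, so $t_{ij,uv} = 0$.

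There is essentially no obstacle: the corollary is a literal substitution of $|A| = |B| = 2$ and $|C_A| + |C_B| = 1$ into Theorem~1, followed by the identification of a $2\times 2$ determinant with the tetrad. The one subtlety worth flagging is that I only invoke the rank \emph{upper} bound in Theorem~1, which holds for every $\Sigma$ consistent with $\cG$; the generic equality direction (with its Lebesgue-measure-zero exception) is not used. This is reassuring because in the algorithmic sections the corollary is applied as a deterministic structural consequence of the graph, and the vanishing of $t_{ij,uv}$ under a single-node t-separation does not depend on generic parameter choices.
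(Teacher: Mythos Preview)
Your proposal is correct and matches the paper's treatment: the paper does not give an explicit proof but simply presents the corollary as a specialization of \rref{thm:trek-separation}, which is exactly the derivation you wrote out (rank bound $\leq 1$ on a $2\times 2$ submatrix forces the determinant, i.e., the tetrad, to vanish). Your remark that only the upper-bound direction is needed, not the generic equality, is a helpful clarification that the paper leaves implicit.
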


We can now see the importance of the first three assumptions in \rref{defn:latent-clustered-model}.
These structural assumptions control the size of t-separating sets between nodes in the same cluster and in different clusters, so that we can apply \rref{thm:consistency} and \rref{corollary:tetrad-representation} to ensure that certain tetrads are either zero or generically non-zero.
In particular, the unique cluster assumption and the bipartite assumption guarantees that two nodes $X_i$ and $X_j$ in the same cluster will be t-separated from their non-descendants by their latent parent.
Thus, clusters of nodes with no descendants can be identified.
Conversely, the triple-child assumption ensures that two nodes $X_i$ and $X_j$ that are not in the same cluster do not get clustered, since we can find a 2x2 submatrix with $i$ indexing one of the rows and $j$ indexing one of the columns that is generically of rank 2.
In \rref{appendix:faithfulness-generic}, we formally state and prove that the following faithfulness assumptions are indeed generic under the first 3 structural assumptions from \rref{defn:latent-clustered-model}:

\begin{restatable}[Cluster tetrad faithfulness]{assumption}{clusterFaithfulness}\label{assumption:cluster-tetrad-faithfulness}
Suppose $X_i$ and $X_j$ are not in the same cluster.
Then there exists some $\{ u, v \}$ such that $t_{ij,uv} \neq 0$.
\end{restatable}

\begin{restatable}[Parent tetrad faithfulness]{assumption}{parentTetradFaithfulness}
\label{assumption:parent-tetrad-faithfulness}
Suppose $X_i$ and $X_j$ are in the same cluster, but $X_i$ has at least one child.
Then there exists some $\{ u, v \}$ such that $t_{ij,uv} \neq 0$.
\end{restatable}

\begin{restatable}[Latent adjacency faithfulness]{assumption}{latentAdjacencyFaithfulness}
\label{assumption:latent-adjacency-faithfulness}
Suppose $X_i \to L_k$.
Let $S_i = \ch(\pa(X_i)) \setminus \{ i \}$ and $S' = \cup_{j \leq i} \ch(L_i)$.
Then $\rho_{i, k \mid S_i, S} \neq 0$ for some $X_k \in \ch(L_k)$
\end{restatable}

\begin{remark}
Since causal structure learning algorithms are always run in a noisy setting, \emph{near} violations of genericity assumptions can degrade the performance of a method, as discussed by \cite{uhler2013geometry}.
In particular, the set of parameters which violate a ``strong" faithfulness condition is generally a positive measure set, extending from the measure zero set where faithfulness is violated.
Fortunately for the current setting, our assumptions require the existence of only a \emph{single} entry of the underlying statistic being far from zero.
Thus, the set of parameters violating the ``strong" version of our faithfulness assumption is an \emph{intersection} of the sets of parameters for which each $t_{ij,uv}$ is near zero, resulting in a smaller set.
In the present work, we will not attempt to quantify the size of this set and the resulting statistical benefits, but note these as interesting directions for future work.
\end{remark}

\section{Methods}\label{sec:methods}

\begin{wrapfigure}{R}{.43\linewidth}
\begin{minipage}{\linewidth}
\vspace{-0.8cm}
\input{contents/alg-main}
\vspace{-0.8cm}
\end{minipage}
\end{wrapfigure}

Our algorithm, presented in \rref{alg:main}, consists of three stages.
As is common in causal structure learning, we present our algorithm with implementation details of hypothesis testing abstracted away.
In particular, we will assume access to two subroutines, whose implementation details will be given in \rref{subsec:implementation-details}.
The first subroutine tests $H_{ci}(X_j, X_A \mid X_B)$, which denotes the null hypothesis that $X_j$ and $X_A$ are conditionally independent given $X_B$.
The second subroutine tests $H_{vt}(X_A, X_B)$, which denotes the null hypothesis that all tetrads of $\Sigma_{A, B}$ vanish.

In the \textbf{first stage} (\rref{alg:find-bottom}, see also \rref{fig:phase1}), we identify clusters of observed variables with the same latent parent.
However, note that since this stage only identifies \textit{leaves} with the same latent parent, it is not guaranteed to identify all nodes with the same latent parent.
This stage simultaneously recovers an ordering over these clusters.
Thus, in the \textbf{second stage} (\rref{alg:mergeCluster}, see also \rref{fig:phase23}a), we iterate over pairs of clusters output from the first stage, identify pairs of clusters with the same latent parent, and merge them, while leaving the ordering of the clusters intact.
In the \textbf{third stage} (\rref{alg:learnDAG}, see also \rref{fig:phase23}bc), we use the clustering and ordering information discovered in the previous two stages to learn a DAG over both latent and observed variables.
In particular, given a node $X_j$ in cluster $C_j$ which comes before the cluster $C_i$ in our ordering, we wish to determine whether $X_j$ has an edge to the associated latent variable $L_i$.
By \rref{assumption:parent-tetrad-faithfulness}, this can be accomplished by checking partial correlations between $X_j$ and the nodes in $C_i$.
These stages compose a consistent algorithm, as established in the following theorem and proven in \rref{appendix:algorithm-consistency}.
In \rref{sec:identifiability}, we have already discussed the importance of the first three structural assumptions from \rref{defn:latent-clustered-model}.
In \rref{appendix:double-parent-violation}, we show how our algorithm fails under a violation of the double-parent assumption.

\begin{figure}
    \centering
    \includegraphics[width = \textwidth]{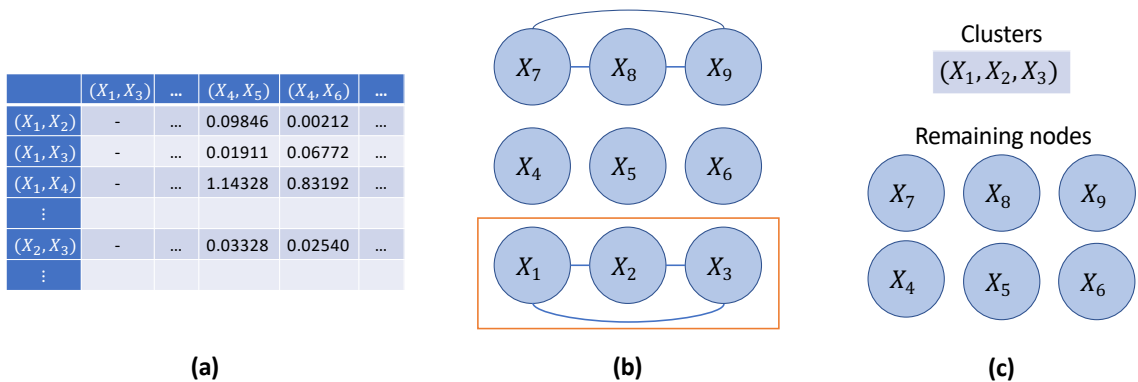}
    \caption{\textbf{Phase 1}: 
    Let the true LFCM be the graph in \rref{fig:phase23}(c).
    In our algorithm's first phase (see \rref{alg:find-bottom}) we perform the following steps:
    \textbf{(a)} Compute tetrad scores between all pairs of nodes.
    \textbf{(b)} For each pair of nodes, test the null hypothesis that all tetrads are zero. Construct a graph with the edge $i - j$ for any pair of nodes where we do not reject the null hypothesis.
    \textbf{(c)} Extract a clique from this graph to be a cluster (e.g., by picking the largest clique with arbitrary tie breaking), remove these nodes and repeat with remaining nodes.
    }
    \label{fig:phase1}
\end{figure}

\begin{restatable}{thm}{consistency}
\label{thm:consistency}
Let $\cG$ be a linear LFCM and let $\bbX \in \bbR^{n \times p}$ be a matrix of samples of the observed variables $X_1, \ldots, X_p$.
Then \rref{alg:main} is consistent under Assumptions \ref{assumption:cluster-tetrad-faithfulness}, \ref{assumption:parent-tetrad-faithfulness}, and \ref{assumption:latent-adjacency-faithfulness}, i.e., as $n \to \infty$, we have $\bbP(\hat{\cG} = \cG) \to 1$.
\end{restatable}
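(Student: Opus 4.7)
The plan is to prove consistency stage-by-stage, first establishing population-level correctness of each subroutine using Corollary \ref{corollary:tetrad-representation} together with the appropriate faithfulness assumption, and then lifting these population arguments to sample consistency by a union bound over the polynomially-many hypothesis tests performed by the algorithm.

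For \texttt{FindOrderedClusters} (Stage 1), I would argue inductively over iterations. At each iteration, let $\cG'$ be the induced subgraph of $\cG$ on the remaining observed nodes together with their latent ancestors. The key claim is that the test graph of \rref{fig:phase1}b contains the edge $i \sim j$ if and only if $X_i, X_j$ share a latent parent $L_k$ that is currently a sink of $L(\cG')$ and neither $X_i$ nor $X_j$ has an observed child in $\cG'$. The ``if'' direction follows from the bipartite and unique-cluster assumptions: the single latent parent $L_k$ t-separates $\{X_i, X_j\}$ from every remaining pair $\{X_u, X_v\}$, so all relevant tetrads vanish by Corollary \ref{corollary:tetrad-representation}. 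The ``only if'' direction is precisely Assumption \ref{assumption:cluster-tetrad-faithfulness}. The triple-child assumption then guarantees that a sink cluster is large enough to be picked out as a maximal clique; after removal, the LFCM structure is preserved on the remainder, so the induction yields a valid reverse-topological ordering.

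For \texttt{MergeClusters} (Stage 2), the obstacle to address is that Stage 1 may fragment a true cluster $\ch(L_k)$: the descendant-free members are extracted first as a leaf, leaving the descendant-bearing members to form a separate cluster later. Using Assumption \ref{assumption:parent-tetrad-faithfulness}, I would show that two Stage-1 clusters should be merged if and only if every tetrad between them vanishes, with the ``if'' direction again coming from the shared latent parent via Corollary \ref{corollary:tetrad-representation} and the ``only if'' direction from Assumption \ref{assumption:parent-tetrad-faithfulness} applied to the descendant-bearing member. Because merged sub-clusters have a common latent parent, the ordering produced by Stage 1 remains a valid topological order over the merged clusters. For \texttt{LearnDAG} (Stage 3), given correct clusters and ordering, I would use Assumption \ref{assumption:latent-adjacency-faithfulness} to show that the partial correlation $\rho_{i, k \mid S_i, S'}$ is nonzero for some $X_k \in \ch(L_k)$ precisely when $X_i \to L_k$, since the conditioning sets $S_i, S'$ block every other trek between $X_i$ and $X_k$.

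Once each stage is correct at the population level, standard consistency of sample covariances, tetrads, and partial correlations under Gaussian or sub-Gaussian data implies that each of the $O(\mathrm{poly}(p))$ individual tests has vanishing type-I and type-II error at an appropriately decaying significance level; a union bound then yields $\bbP(\hat{\cG} = \cG) \to 1$. I expect the main obstacle to be Stage 1, specifically confirming that clique extraction picks out exactly a sink cluster rather than some interior cluster whose tetrads happen to vanish spuriously, and ensuring that the extraction order tracks the topological order of $L(\cG)$. A secondary subtlety is cleanly separating the combinatorial population-level argument from the finite-sample analysis so that decision errors do not cascade across stages, which the union bound over a fixed number of tests handles but requires some care to state correctly.
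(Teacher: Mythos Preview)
Your Stage~1 argument has a genuine gap. Your key claim---that $i \sim j$ appears in the test graph iff $X_i,X_j$ share a latent parent $L_k$ which is a \emph{sink} of $L(\cG')$---is false in the ``only if'' direction. If $X_i,X_j$ are childless children of a \emph{non-sink} $L_k$, then $L_k$ still t-separates $\{X_i,X_j\}$ from every other pair (every trek must pass through $L_k$ because $X_i,X_j$ have no other neighbors), so all tetrads vanish and $i\sim j$ is present. Concretely, take $L_1 \to L_2$ in $L(\cG)$ with $|\ch(L_1)|=10$, exactly two of which point to $L_2$, and $|\ch(L_2)|=3$. The eight childless children of $L_1$ form a clique of size $8$ in the test graph, which is the maximal clique; the sink cluster $\ch(L_2)$ is \emph{not} extracted first. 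After removing those eight nodes, $L_1$ has only two remaining children, so the triple-child assumption fails and your ``LFCM structure is preserved on the remainder'' induction hypothesis breaks down. The obstacle you flag at the end (clique extraction missing the sink cluster) is therefore not a corner case about spuriously vanishing tetrads but a structural feature of the model.

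The paper handles this differently. It does \emph{not} claim that a full sink cluster is extracted at each step. Instead it shows (i) any clique lies inside a single true cluster and consists of currently childless nodes, so Stage~1 outputs a \emph{refinement} of the true clustering; (ii) the order in $\pi$ is topologically consistent in the sense that no node is placed before any of its observed descendants; and (iii) the \emph{double-parent assumption}---which you never invoke---guarantees both that the residual set $R$ with $|R|\le 3$ lies in a single cluster and that every Stage~1 cluster has size $\ge 2$, so Stage~2 has enough nodes to test. The merging step then repairs the refinement. A smaller point: for Stage~2's ``only if'' direction (two sub-clusters with \emph{different} latent parents should not be merged) you appeal to \rref{assumption:parent-tetrad-faithfulness}, but that assumption concerns two nodes in the \emph{same} cluster; the relevant assumption is \rref{assumption:cluster-tetrad-faithfulness}, which is what the paper uses.
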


Next, we outline the complexity of our algorithm, using placeholders for the complexities of hypothesis tests in order to keep our results general.
Let $f(d, n)$ denote the cost of performing the hypothesis test $H_{vt}$ on $d$ statistics from $n$ samples, when all sufficient statistics are pre-computed.
Let $g(p)$ denote the complexity of an algorithm used to find a clique in a graph on $p$ nodes.
While finding the \textit{largest} clique in a graph is in general NP-hard, we can avoid this complexity, since the consistency of our algorithm does not rely on picking the largest clique at each step, only a clique of 3 or more nodes (picking larger cliques is simply a tool for improving statistical accuracy).

In the following, let $M$ be the maximum size of any returned cluster, and let $K$ be the number of clusters discovered by the algorithm.
By definition, $M \leq p$ and $K \leq p$, so replacing these quantities by $p$ gives complexities that are only in terms of the known problem parameters.
However, such upper bounds can be highly pessimistic.
If the true graph has few nodes per cluster, or a small number of latent nodes, then with enough samples, $M$ and $K$ will also be small, respectively.

\begin{figure}[t]
\begin{minipage}{\textwidth}
    \begin{minipage}[t]{.46\textwidth}
    \input{contents/alg-phase1}
    \end{minipage}
    ~~
    \begin{minipage}[t]{.48\textwidth}
    \input{contents/alg-phase2}
    \vspace{-0.7cm}
    \input{contents/alg-phase3}
    \end{minipage}
\end{minipage}
\end{figure}

\begin{thm}
The complexity of each algorithm is:
\begin{itemize}
    \item[(a)] \rref{alg:find-bottom} takes $\cO(p^4 + p^3 f(p^2, n) + p g(p))$.
    
    \item[(b)] \rref{alg:mergeCluster} takes $\cO(p^2 M^4)$.
    
    \item[(c)] \rref{alg:learnDAG} takes $\cO(p K M)$.
\end{itemize}
\end{thm}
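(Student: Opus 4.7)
The plan is to bound each algorithm by counting loop iterations and multiplying by the per-iteration cost, treating $f(d,n)$ and $g(p)$ as black boxes for the tetrad-vanishing test and clique extraction, respectively.

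For part (a), I analyze \rref{alg:find-bottom}. The outer \textbf{while} loop decreases $|R|$ by at least one per iteration (the extracted clique has size $\geq 1$) and terminates once $|R| \leq 3$, so it runs $O(p)$ times. Moreover, because $|R_t|$ decreases monotonically, the total number of inner pair iterations satisfies $\sum_t \binom{|R_t|}{2} = O(p^3)$. Each inner iteration constructs the conditioning column set $[p] \setminus \{i,j\}$ at cost $O(p)$, then calls $H_{vt}$ on the $\binom{p-2}{2} = O(p^2)$ tetrads of the $2 \times (p-2)$ submatrix at cost $f(p^2, n)$. Summing gives $O(p^4)$ bookkeeping plus $O(p^3 f(p^2, n))$ hypothesis-testing cost, while the $O(p)$ clique extractions contribute $O(p \, g(p))$, yielding the claimed bound.

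For part (b), I analyze \rref{alg:mergeCluster}. A single pass over the $\binom{K}{2} = O(p^2)$ cluster pairs tests $H_{vt}(c_1 \cup c_2, c_1 \cup c_2)$ on $|c_1 \cup c_2| \leq 2M$ variables, which involves $\binom{2M}{2}^2 = O(M^4)$ tetrads; so a full pass costs $O(p^2 M^4)$. The convergence loop is handled by amortization: a merge only modifies one cluster, so after the initial pass only the $O(K) = O(p)$ pairs adjacent to that cluster need to be re-tested. Since at most $K - 1 = O(p)$ merges can occur, the total number of $H_{vt}$ invocations across all passes remains $O(p^2)$, and the total cost is $O(p^2 M^4)$. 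For part (c), I analyze \rref{alg:learnDAG}. Initialization of latent nodes and the $L_i \to X_j$ edges costs $O(p + K) = O(p)$. The main loop iterates over at most $p K$ pairs $(X_j, L_i)$ with $X_j \prec L_i$. For each pair, the test $H_{ci}(X_j, \ch(L_i) \mid S)$ reduces to $|\ch(L_i)| \leq M$ partial-correlation checks, each $O(1)$ from pre-computed sufficient statistics, giving a per-pair cost of $O(M)$ and a total of $O(p K M)$.

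The main obstacle is the amortized accounting. In part (a), naively bounding each of the $O(p)$ outer iterations by the maximum pair count $O(p^2)$ loses a factor of $p$; one must sum $\sum_t |R_t|^2$ using the monotone shrinkage of the residual set to get $O(p^3)$ rather than $O(p^3 \cdot p)$. In part (b), naively multiplying a single-pass cost by the number of convergence iterations overshoots by a factor of $K$; the argument requires observing that post-merge only cluster-adjacent pairs need to be re-examined, so the total test count collapses back to $O(p^2)$.
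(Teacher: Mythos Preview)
Your argument is correct, but it diverges from the paper's in two places and your ``main obstacle'' paragraph misdiagnoses one of them.

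For part (a), the paper attributes the $\cO(p^4)$ term to \emph{precomputing all tetrads and their p-values} once up front, not to per-iteration set construction. More importantly, the paper does \emph{not} amortize: it simply bounds each of the $\cO(p)$ rounds by $\cO(p^2)$ hypothesis tests, giving $\cO(p^3)$ tests total and hence $\cO(p^3 f(p^2,n))$. Your claim that this naive per-round bound ``loses a factor of $p$'' is mistaken; $\cO(p)$ rounds times $\cO(p^2)$ pairs is already $\cO(p^3)$, so your amortized sum $\sum_t \binom{|R_t|}{2}$ yields exactly the same order and buys nothing here. Your accounting is not wrong, just unnecessarily elaborate.

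For part (b), you are actually more careful than the paper, which does not address the ``repeat until convergence'' wrapper at all and simply counts $\cO(p^2)$ pair checks. Your amortization via the observation that at most $K-1$ merges occur, each affecting $\cO(K)$ pairs, is a genuine addition. Note, though, that it presumes an implementation that only re-tests pairs adjacent to a newly merged cluster; the pseudocode as written re-scans all pairs each pass, so you are tacitly assuming a reasonable but unstated optimization.

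Part (c) matches the paper essentially verbatim.
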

\begin{proof}
\textbf{(a)} Computing all tetrads and their associated p-values is $\cO(p^4)$.
In each round, we perform $\cO(p^2)$ hypothesis tests, each on $\cO(p^2)$ statistics, so that the complexity at each round is $\cO(p^2 f(p^2, n))$.
After performing these tests, we identify the largest clique in a graph of $\cO(p)$ nodes, so that the total run time per round is $\cO(p^2 f(p^2, n) + g(p))$.
At most $p$ rounds are required, resulting in the stated complexity.

\textbf{(b)} We must check $\cO(p^2)$ pairs of clusters for whether or not they should be merged, and the maximum size of the union of any such pair is $\cO(M)$.
To check whether $\cO(M)$ nodes belong to the same cluster, we require a hypothesis test on $\cO(M^4)$ statistics, so that this step takes $\cO(p^2 f(M^4, n))$.

\textbf{(c)} We perform $\cO(p K)$ hypothesis tests, each based on $\cO(M)$ partial correlations.
\end{proof}

Assume that we use the Sidak adjustment procedure explained in the next section (so that $f(d, n) = \cO(d)$), and a greedy algorithm for picking cliques (so that $g(p) = p^3$).
Then, replacing $K$ and $M$ by $p$, we have that \rref{alg:find-bottom} takes $\cO(p^5)$, \rref{alg:mergeCluster} takes $\cO(p^6)$, and \rref{alg:learnDAG} takes $\cO(p^3)$, so that the overall complexity of our algorithm is at most $\cO(p^6)$.
Even in this pessimistic analysis, this complexity is relatively low for causal structure learning, which is known to be NP-hard in general \citep{chickering2004large}, and for which variants of the best-known algorithms, such as PC \citep{spirtes2000causation} and GES \citep{chickering2002optimal}, typically have complexity $\cO(p^{d+2})$, where $d$ is the maximum in-degree of the graph \citep{chickering2020statistically}.

\subsection{Implementation Details}\label{subsec:implementation-details}

The null hypothesis $H_{vt}$ and $H_{ci}$ used in Algorithms \ref{alg:find-bottom}, \ref{alg:mergeCluster}, and \ref{alg:learnDAG} imply that some vector-valued statistic of the covariance matrix is equal to zero.
Procedures for \textit{simultaneous hypothesis testing} are designed to (asymptotically) control the false discovery rate (FDR) of such a form of hypothesis test.
In practice, we found that computing marginal p-values and performing \textit{Sidak adjustment} \cite{drton2007multiple} yields good performance.
In particular, given p-values $\{ \pi_m \}_{m=1}^M$, the Sidak-adjusted p-values are
\[
\pi_m^\sidak = 1 - (1 - \pi_m)^M
\]

\begin{remark}
    The Sidak adjustment uses only the \textit{marginal} distributions of each tetrad, neglecting potentially important information about the \textit{correlations} between tetrads.
    In contrast, the \emph{max-T adjustment} accounts for correlations between the tested statistics by estimating their correlation matrix, and has been shown to outperform the Sidak adjustment both theoretically and in practice \citep{drton2007multiple,chernozhukov2013gaussian}.
    However, the max-T adjustment requires sampling from a potentially high-dimensional multivariate normal distribution, an operation which is $\cO(d^3)$ for dimension $d$.
    We have found that in practice, max-T adjustment performs similarly to Sidak adjustment while taking substantially longer.
    Therefore, we use Sidak adjustment for our experimental results, but provide capability for max-T adjustment in our codebase.
\end{remark}

Given a set adjusted p-values and a significance level $\alpha$, we reject the null hypothesis if \textit{any} of the adjusted p-values are smaller than $\alpha$.
To test conditional independence, recall that $H_{ci}(X_j, X_A \mid X_B)$ holds in a multivariate normal if and only if the vector of partial correlations $\{ \rho_{ij \mid B} \}_{i \in A}$ is zero.
To compute p-values, we use a widely used procedure which we call the \textit{Fisher correlation test}.
First, given the sample partial correlations $\{ \hat{\rho}_{ij \mid B} \}_{i \in A}$, we apply the \textit{Fisher z-transformation} $\hat{z}_{ij \mid B} = \sqrt{n - |B| - 3} \arctanh(\hat{\rho}_{ij \mid B})$.
Then, we compute the two-tailed p-value of $\hat{z}_{ij \mid B}$ with respect to $\cN(0, 1)$, i.e., $\pi_{ij\mid B} = 2 Q(|\hat{z}_{ij \mid B}|)$, where $Q$ is the tail distribution function of $\cN(0, 1)$.

Next, to test $H_{vt}(X_A, X_B)$, we adopt the widely-used \textit{Wishart test} to compute the p-values \citep{wishart1928sampling,kummerfeld2016causal}, which we now briefly describe.
First, we compute the \textit{sample tetrads} $\hat{t}_{ij,uv} = \hSigma_{iu} \hSigma_{jv} - \hSigma_{iv} \hSigma_{ju}$ for $\{i, j \} \subset A$ and $\{ u, v \} \subset B$ such that $i, j, u$ and $v$ are distinct.
Then, we normalize each sample tetrad, dividing by an estimate of its standard deviation to obtain the z-score $\hat{z}_{ij,uv}$.
\cite{drton2008moments} give the following formula for the variance of sample tetrads in terms of the true covariance matrix $\Sigma$:
\[
\Var\left( \hat{t}_{ij,uv} \right) = n \cdot (n - 1)^{-3} \cdot \left( (n + 2) |\Sigma_{[ij],[ij]}| \cdot |\Sigma_{[uv],[uv]}| - n |\Sigma_{[ijuv],[ijuv]}| + 3 n |\Sigma_{[ij],[uv]}| \right),
\]
where $|A| = \det(A)$. 
To estimate the variance, we use the above formula with the sample covariance $\hSigma$ replacing $\Sigma$. 
Finally, we compute the two-tailed p-value of $\hat{z}_{ij,uv}$ with respect to $\cN(0, 1)$.
\begin{figure}
    \centering
    \includegraphics[width = \textwidth]{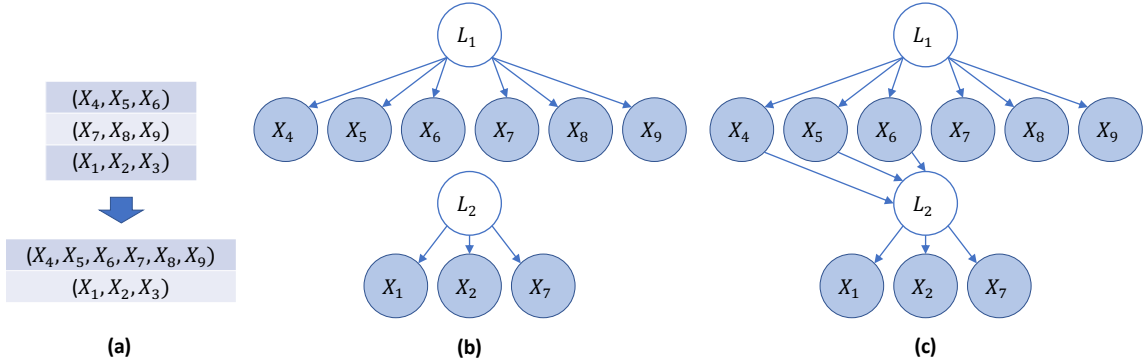}
    \caption{\textbf{Phases 2 and 3.} 
    \textbf{(a)} Merge pairs of clusters based on vanishing tetrad tests.
    \textbf{(b)} Introduce latent nodes, and add edges from latent nodes to children.
    \textbf{(c)} Add parents of latent nodes based on conditional independence testing.
    }
    \label{fig:phase23}
\end{figure}

\section{Empirical Results}\label{sec:empirical}

We evaluate our algorithm in two settings.
First, we evaluate in a purely synthetic setting, which allows us to generate SCMs which exactly match our proposed model.
Then, we evaluate in a semi-synthetic setting, modifying real data to more closely match our proposed model while demonstrating that our approach has promise in real-world biological settings.

\subsection{Synthetic data experiments}

We begin by briefly describing the simulation settings used for our experiments, before describing the baselines and metrics which we use for evaluation.
We generate a graph with 10 latent nodes, we first sample a ``latent" skeleton $L(\cG)$ over $\{ 1, 2, \ldots, 10 \}$ from a directed Erd{\"o}s-R{\'e}nyi model with edge probability $0.5$.
Then, for each latent node $L_k$, we generate $c_k \sim \Unif(3, 6)$ children.
Finally, for each edge $L_k \to L_{k'}$ in $L(\cG)$, we sample $d_{k,k'} \sim \Unif(2, |\ch(L_k)|)$, then sample $d_{k,k'}$ children of $L_k$. 
For each selected child $c_k$, we add the edge $c_k \to L_{k'}$, giving us a DAG $\cG$ over both latent and observed nodes which has latent skeleton $L(\cG)$ and satisfies \rref{defn:latent-clustered-model}.

Given this DAG, we generate a linear SCM as follows, proceeding in topological order.
For each node $X_j$, if the node has no parents, its equation is $X_j = \epsilon_j$ for $\epsilon_j \sim \cN(0, 1)$.
If the node has parents, then for each parent $X_i$, we sample an ``initial" weight $\tilde{w}_{ij} \sim \Unif([-1, -.25] \cup [.25, 1])$.
Next, we describe how to normalize these weights in order to avoid the \textit{varsortability} issue described by \cite{reisach2021beware}, where simulated DAGs are easy to learn because the variance of each node tends to increase according to the topological order.
Given these initial weights, we simulate $B$ ``parental contributions" $\mu_j^{(b)} = \sum_{i \in \pa_\cG(j)} \tilde{w}_{ij} X_i^{(b)}$ for $X^{(b)}$ sampled from the linear SEM defined over $i < j$.
The sample variance $\hsigma_j$ of $\mu_j^{(b)}$ serves as an estimate for the variance that the parents of $j$ will contribute to $X_j$.
Finally, we ensure that $X_j$ has variance 1 and that half of its variance is contributed by its parents by setting the final weights as $w_{ij} = (2 \hsigma_j)^{-1/2} \tilde{w}_{ij}$ and $\epsilon_j \sim \cN(0, 1/2)$.

\textbf{Accuracy of learning clusters.} In our first set of experiments, we evaluate the accuracy of the learned clusters.
To measure the accuracy over the learned clustering compared to the underlying clustering, we use the following criteria: the pair $(X_i, X_j)$ is a \textit{true positive} if $X_i$ and $X_j$ are in the same underlying cluster and are in the same learned cluster, the pair is a \textit{false positive} if $X_i$ and $X_j$ are not in the same underlying cluster but are in the same learned cluster, and so on.
We generate 50 different SEMs via the process described above, and from each SEM we generate $n = 200$ samples.
We run our algorithm using significance levels ranging from $.05$ and $.5$.
The results are shown in \rref{fig:phase1_results}. Due to interactions between the hypothesis tests used by our algorithm (denoted ``LFCM", shown in blue), the ROC curve is highly non-monotonic over larger ranges of values, so that the curve occupies only a small range of the plot, though it clearly drastically outperforms the competing methods, achieving almost perfect performance.
In particular, we consider two baselines.
First, we compare to \textit{spectral clustering} (denoted ``SC", shown in green), as implemented in \texttt{sklearn} in Python, a widely used clustering technique in genomics \citep{higham2007spectral}, with a varying number of estimated clusters from 2 to 30.
Second, we compare to the FindOneFactorCluster algorithm of \cite{kummerfeld2016causal} (denoted ``FOFC", shown in red).
We found that spectral clustering performs slightly better than random guessing, while FOFC performs about the same as random guessing, reflecting the drastic deviation from the measurement assumption on which it relies.
Finally, we verified that randomly picking $K$ clusters of equal size (denoted ``Random", shown in orange), for $K$ varying from 2 to 30, matched the diagonal random guessing line.

\begin{figure}
    \centering
    \begin{subfigure}{.48\linewidth}
    \includegraphics[width=\textwidth]{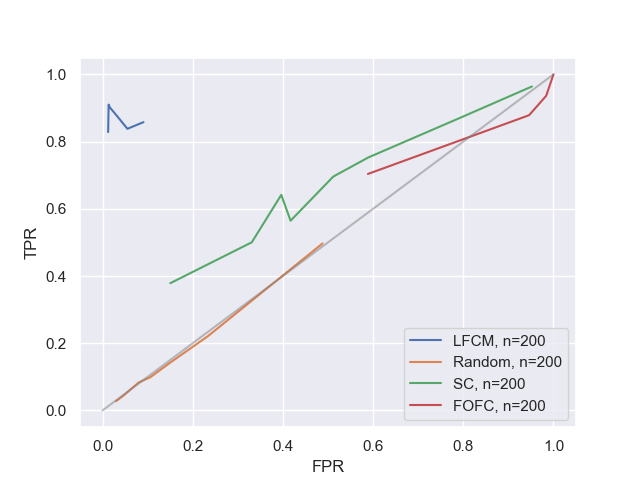}
    \caption{\textbf{Cluster recovery performance.}}
    \label{fig:phase1_results}
    \end{subfigure}
    \begin{subfigure}{.48\linewidth}
    \includegraphics[width=\textwidth]{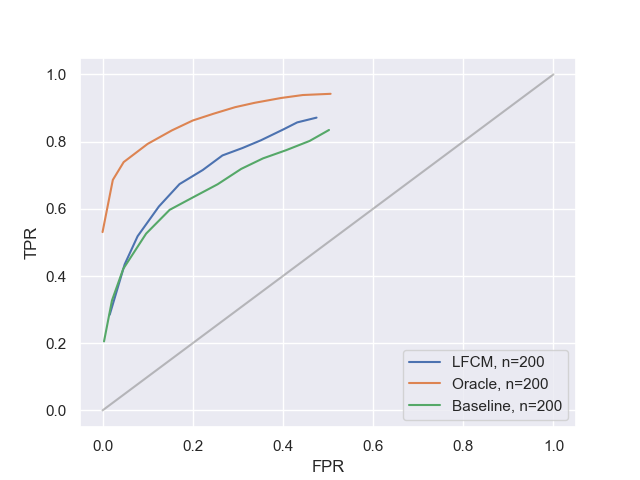}
    \caption{\textbf{Edge recovery performance.}}
    \label{fig:phase2_results}
    \end{subfigure}
    \caption{\textbf{Performance on synthetic data.} The first two phases of our algorithm almost perfectly recover the ground truth clusters, while the third phase of our algorithm demonstrates the utility of multiple hypothesis testing for recovering edges between observed nodes and latent nodes.}
\end{figure}

\textbf{Accuracy of learning edges from observed nodes to latent nodes.} In our second set of experiments, we evaluate the accuracy of learning the edges from observed nodes to latent nodes, when the true clusters and their ordering is known.
In particular, for $L_i \prec L_j$ in the ordering, and $X_i \in \ch(L_i)$, the pair $(X_i, L_j)$ is considered a \textit{true positive} if $X_i \to L_j$ in the true LFCM as well as in the estimated LFCM, a \textit{false positive} if it is not in the true LFCM but does appear in the estimated LFCM, and so on.
In \rref{fig:phase2_results}, we compare the third phase of our algorithm (denoted ``LFCM", shown in blue) to a \textit{baseline} which simply uses a single child of each latent node for the conditional independence test (denoted ``Baseline", shown in green), as well as an \textit{oracle} which is able to observe the values of the latent nodes and is thus infeasible (denoted ``Oracle", shown in orange).
As expected, our algorithm does not perform as well as this unrealizable case, but still performs significantly better than random (the diagonal line) and noticeably better than the baseline.

\subsection{Semi-synthetic experiments on protein signaling data}

In this section, we demonstrate the applicability of our method to a real-world dataset in a semi-synthetic setting.
The Sachs protein mass spectroscopy dataset \citep{sachs2005causal} is a widely used benchmark for causal discovery, in part due to the existence of a commonly accepted ground truth network over the 11 measured protein expression values, shown in \rref{fig:real-data}a.
We use the 1,755 ``observational" samples, where the experimental conditions involve only perturbing receptor enzymes, and not any signaling molecules, as described in \cite{wang2017permutation}.
To make the ground truth network more similar to a latent factor causal model, we perform three data-processing steps: (1) we ``condition" on PKA, by regressing it out of the dataset, (2) we ``remove" the direct effect of Raf on Mek, and (3) we ``marginalize" out PIP3 and PKC by removing the corresponding columns from the dataset.
We ``remove" the direct effect of Raf on Mek as follows.
First, we regress Mek on its two remaining parents, Raf and PKC.
Call the resulting regression coefficient for Raf $\beta_{Raf}$.
For each sample, we subtract the value of Raf times the $\beta_{Raf}$ from the value of Mek.
Note that we do \textit{not} remove the direct effect of PLC$\gamma$ on PIP2, since then our algorithm collapses all nodes into a single cluster.
The processed graph is show in \rref{fig:real-data}b.

Running our method with significance level $\alpha = 0.01$ for $H_{vt}$ and $\alpha = 0.1$ for $H_{ci}$, we obtain the network shown in \rref{fig:real-data}c.
The clustering by our algorithm closely matches the clustering (\textit{Akt, PLC$\gamma$, PIP2}), (\textit{p38, JNK, Raf, Mek, Erk}) induced by the true network, with the exception that \textit{Akt} from the first cluster and \textit{Erk} from the second cluster are pulled out into a cluster with one another, which may indicate that the effect of PKA on Akt and Erk cannot be completely removed using a purely linear approach.
The ordering between the clusters (\textit{PLC$\gamma$, PIP2}) and  (\textit{p38, JNK, Raf, Mek}) is preserved, but the edge $PIP2 \to L3$ is missing.

\begin{figure}
    \centering
    \includegraphics[width=\textwidth]{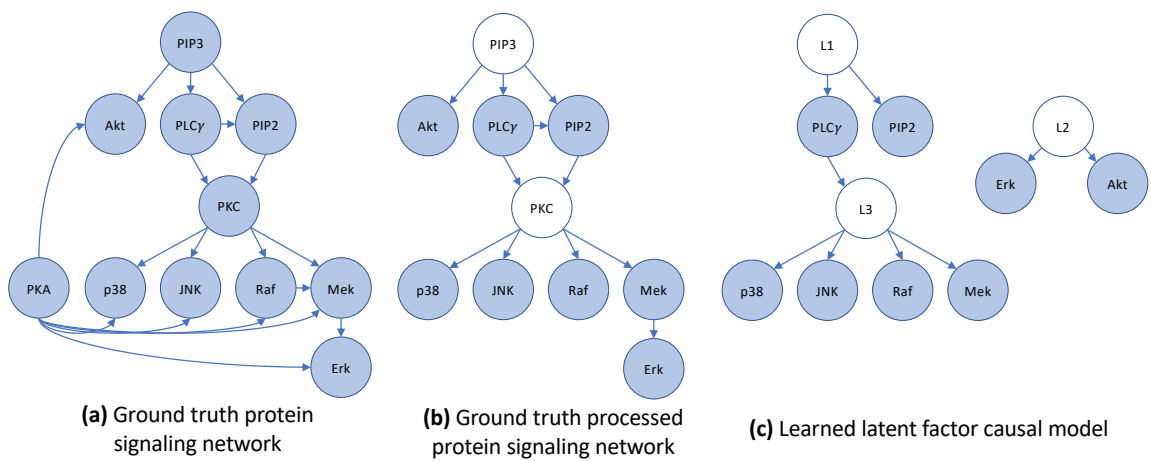}
    \caption{\textbf{Learning a latent factor causal model for protein signaling}. Our recovered model in \textbf{(c)} nearly captures the ground truth network in \textbf{(b)}.}
    \label{fig:real-data}
\end{figure}


\section{Discussion}\label{sec:discussion}

In this paper, we introduce a method (\rref{alg:main}) for learning \textit{latent factor causal models} (LFCMs), a novel, biologically-motivated class of causal models with latent variables.
We showed that these models are identifiable in the linear setting using rank constraints on submatrices of the covariance matrix, and that our method provides a consistent estimator for these models.
We also showed that our method outperforms existing clustering algorithms on synthetic data, and almost perfectly recovers a widely-accepted ground truth network in a semi-synthetic biological setting.
These results serve as a proof-of-concept, suggesting that our algorithm may be able to shed biological insight on the problem of identifying the spatial clustering of genes in the cell nucleus given data on the expression of the genes. 
Interestingly, since it is possible (although expensive) to measure the 3D organization of the genome in the cell nucleus ~\citep{Lieberman}, there is a meaningful avenue to validate our method on the important biological application of connecting 3D genome organization with gene expression~\citep{Shiva_review}.
We conclude with a discussion of the limitations of our model, which suggests a number of other directions for future work.

\textbf{Limitations.} The latent factor causal model (LFCM) class considered in this paper has two obvious limitations.
First, we make the strong parametric assumption of a linear Gaussian SEM.
While many nonparametric conditional independence tests have been proposed \citep{gretton2007kernel,zhang2011kernel}, we are not aware of nonparametric tests for shared latent factors that would generalize $H_{vt}$.
Thus, extending our algorithm to a nonparametric setting would require development of such tests.
In particular, generalizing tetrad constraints to the nonlinear setting is an interesting direction for future research.
Second, we make two strong structural assumptions.
The ``unique cluster" assumption is well-motivated by our biological setting of interest, and is likely the easiest assumption to remove since \rref{thm:trek-separation} already provides a generalization of the rank constraint we leverage.
Indeed, generalized rank constraints have already been explored in prior work on factor analysis \citep{drton2007algebraic,kummerfeld2014causal}.
The ``bipartite assumption" has two components which may be separately examined.
First, the assumption that there are no edges between observed variables is most reasonable for systems such as gene regulatory networks where a different, unobserved entity class (in this case, proteins) mediates \textit{all} interactions between the observed variables (i.e., genes).
This assumption may also be expendable, for instance by allowing for a \textit{small number} of edges between observed variables, akin to the low-rank plus sparse literature in previous work on learning with exogenous latent variables \citep{frot2019robust,agrawal2021decamfounder}.
Similarly, existing techniques \citep{cai2019triad,xie2020generalized} may help to eliminate the assumption that there are no edges between latent variables.
In addition to relaxing these assumptions, it would be of interest to develop procedures for testing these assumptions in data, e.g., by extending recent work \citep{agarwal2020synthetic} which develops a spectral-energy-based hypothesis test for structural assumptions in latent variable models.


\acks{
Chandler Squires was partially supported by  an  NSF  Graduate  Research  Fellowship.
All authors were partially supported by NSF (DMS-1651995), ONR (N00014-17-1-2147 and N00014-22-1-2116), the MIT-IBM Watson AI Lab, MIT J-Clinic for Machine Learning and Health, the Eric and Wendy Schmidt Center at the Broad Institute, and a Simons Investigator Award to Caroline Uhler.
}

\bibliography{bib}

\newpage
\appendix


\section{Faithfulness assumptions are generic}\label{appendix:faithfulness-generic}

We first recall the assumptions from \rref{sec:identifiability}.

\clusterFaithfulness*
\parentTetradFaithfulness*
\latentAdjacencyFaithfulness*

\begin{prop}
\rref{assumption:cluster-tetrad-faithfulness} holds generically.
\end{prop}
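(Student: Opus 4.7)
The plan is to reduce the statement to a direct application of the trek separation theorem (\rref{thm:trek-separation}). Fix any two observed nodes $X_i, X_j$ in distinct clusters, with latent parents $L_k$ and $L_{k'}$ (so $k \neq k'$), and let $A = \{X_i, X_j\}$. I will exhibit a specific pair $\{X_u, X_v\}$ such that the minimum-size t-separator between $A$ and $B = \{X_u, X_v\}$ in $\cG$ has cardinality at least $2$. By \rref{thm:trek-separation}, this forces $\rank(\Sigma_{A,B}) = 2$ generically, which is exactly the condition $t_{ij,uv} \neq 0$. Since the set of pairs $(i,j)$ is finite and each generic condition excludes a Lebesgue null set of parameter values, the union of these null sets is still null and \rref{assumption:cluster-tetrad-faithfulness} holds on the complement.

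The main construction is as follows. By the triple-child assumption, $|\ch(L_k)| \geq 3$ and $|\ch(L_{k'})| \geq 3$, so I may pick $X_u \in \ch(L_k) \setminus \{X_i\}$ and $X_v \in \ch(L_{k'}) \setminus \{X_j\}$. Consider the two concrete treks $T_1 = (L_k \to X_i,\; L_k \to X_u)$ between $X_i$ and $X_u$, and $T_2 = (L_{k'} \to X_j,\; L_{k'} \to X_v)$ between $X_j$ and $X_v$. Any t-separator $(C_A, C_B)$ with $|C_A| + |C_B| = 1$ must intersect the ``$A$-side'' path or the ``$B$-side'' path of both $T_1$ and $T_2$. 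The vertex sets of the $A$-side paths are $\{L_k, X_i\}$ and $\{L_{k'}, X_j\}$, which are disjoint by the bipartite assumption together with $k \neq k'$; similarly the two $B$-side path vertex sets $\{L_k, X_u\}$ and $\{L_{k'}, X_v\}$ are disjoint. Hence no single node lies in a common side of both $T_1$ and $T_2$, ruling out any t-separator of size $1$.

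Since additional treks between $A$ and $B$ only enlarge (or preserve) the minimum t-separator size, the bound $|C_A| + |C_B| \geq 2$ persists for all valid t-separators. Invoking \rref{thm:trek-separation}, this minimum is attained generically, so $\rank(\Sigma_{A,B}) = 2$ on a Zariski-open set of parameters consistent with $\cG$. Rank $2$ of a $2 \times 2$ matrix is equivalent to nonvanishing of its determinant, i.e., $t_{ij,uv} \neq 0$.

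The step I expect to be most delicate is verifying that the two vertex-sets truly are disjoint, which relies on all three structural assumptions of \rref{defn:latent-clustered-model}: the unique-cluster and bipartite assumptions to guarantee $X_i \neq L_{k'}$, $X_j \neq L_k$, and $L_k \neq L_{k'}$; and the triple-child assumption to guarantee that the auxiliary witnesses $X_u, X_v$ exist at all. Everything else is a direct quotation of \rref{thm:trek-separation} plus the standard observation that a finite union of measure-zero ``bad parameter'' sets, one per ordered pair $(i,j)$, remains measure zero.
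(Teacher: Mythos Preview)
Your proposal is correct and follows essentially the same route as the paper: choose $X_u$ as a sibling of $X_i$ and $X_v$ as a sibling of $X_j$ (via the triple-child assumption), exhibit the two treks through $L_k$ and $L_{k'}$, argue that no size-$1$ t-separator can block both, and invoke \rref{thm:trek-separation}. Your disjointness argument for the $A$-side and $B$-side vertex sets is in fact more careful than the paper's terse claim that the separating set ``must contain $L_i$ and $L_j$,'' and your explicit finite-union-of-null-sets remark is a welcome addition that the paper leaves implicit.
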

\begin{proof}
Let $L_i = \pa(X_i)$ and $L_j = \pa(X_j)$.
By the triple child assumption, there exists some $X_u$ in the same cluster as $X_i$, and some $X_v$ in the same cluster as $X_j$.
Then any set which t-separates $\{ i, j \}$ and $\{ u, v \}$ must contain $L_i$ and $L_j$, since $i$ must be separated from $u$ and $j$ must be separated from $v$, respectively.
Therefore, by \rref{thm:trek-separation}, $\rank(\Sigma_{[ij],[uv]}) = 2$ generically, i.e., $t_{ij,uv} \neq 0$ generically.
\end{proof}

\begin{prop}
\rref{assumption:parent-tetrad-faithfulness} holds generically.
\end{prop}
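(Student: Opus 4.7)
The plan is to use the trek-separation theorem (\rref{thm:trek-separation}) to show that we can find a pair $\{u,v\}$ such that every $(C_A, C_B)$ that t-separates $\{X_i, X_j\}$ from $\{X_u, X_v\}$ satisfies $|C_A| + |C_B| \geq 2$. Since $\Sigma_{[ij],[uv]}$ is a $2\times 2$ matrix, this forces the rank to equal $2$ generically, which gives $t_{ij,uv} \neq 0$ generically.

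\textbf{Construction of $u$ and $v$.} Let $L_k$ be the common latent parent of $X_i$ and $X_j$, which exists by the unique cluster assumption. By the triple-child assumption, there exists some $X_u \in \ch(L_k) \setminus \{X_i, X_j\}$. By hypothesis, $X_i$ has at least one child, and by the bipartite assumption this child must be latent; call it $L_{k'}$. Finally, applying the triple-child assumption to $L_{k'}$, pick any $X_v \in \ch(L_{k'})$. Observe that $X_v$ lies in a different cluster than $X_i, X_j, X_u$, so the four indices are distinct.

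\textbf{No size-one t-separator.} I would enumerate the relevant treks between $\{X_i, X_j\}$ and $\{X_u, X_v\}$. The treks from either $X_i$ or $X_j$ to $X_u$, and the treks from $X_j$ to $X_v$, must pass through $L_k$ as source (or higher common ancestor) because $L_k$ is the unique parent of $X_j$ and $X_u$ and $X_j$ has no directed path to $X_v$. The key trek is $T^* = X_i \to L_{k'} \to X_v$, with trivial $P_1 = (X_i)$ and source $X_i$ itself. Because $P_1(T^*) = \{X_i\}$ and $P_2(T^*) = \{X_i, L_{k'}, X_v\}$, blocking $T^*$ forces $C_A \ni X_i$ or $C_B \cap \{X_i, L_{k'}, X_v\} \neq \emptyset$. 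I would then go through each candidate singleton $(C_A, C_B)$: any choice placing the single node on the ``$L_k$-side'' (e.g.\ $\{L_k\}$) fails to block $T^*$; any choice that blocks $T^*$ (e.g.\ $\{X_i\}$, $\{L_{k'}\}$, or $\{X_v\}$ in either $C_A$ or $C_B$) fails to block the trek $X_j \leftarrow L_k \to X_u$. Hence no size-one t-separator exists.

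\textbf{Conclusion.} Exhibiting an explicit size-two t-separator such as $(\{L_k\}, \{L_{k'}\})$ confirms that the minimum t-separator has size exactly $2$, so \rref{thm:trek-separation} yields $\rank(\Sigma_{[ij],[uv]}) = 2$ generically and therefore $t_{ij,uv} \neq 0$ generically. The main subtlety is the bookkeeping in the case analysis: one must verify that the trivial-path trek $T^*$ from $X_i$ to $X_v$ really is not blocked by any single node supported on the $L_k$-cluster side, while simultaneously the treks emanating from $X_j$ demand blocking on that side. This asymmetry between the two rows of the tetrad, forced by $X_i$ having a latent child and $X_j$ not necessarily having one, is precisely what the hypothesis of the proposition encodes.
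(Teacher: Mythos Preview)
Your proposal is correct and follows essentially the same approach as the paper: you construct the identical witnesses---$X_u$ a third child of the shared latent parent $L_k$, and $X_v$ a child of some latent child $L_{k'}$ of $X_i$---and then invoke \rref{thm:trek-separation} to conclude that $\rank(\Sigma_{[ij],[uv]})=2$ generically. The paper's version is terser, simply asserting that any t-separator ``must contain $L_i$ and $L_v$''; your case analysis (distinguishing the trek $T^*=X_i\to L_{k'}\to X_v$ from the treks through $L_k$) makes the same point more carefully and also supplies the explicit size-two separator $(\{L_k\},\{L_{k'}\})$, which the paper omits.
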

\begin{proof}
Let $L_i = \pa(X_i) = \pa(X_j)$.
$L_i$ must have some other child $X_u$ by the triple child assumption.
Let $L_v$ be some child of $X_i$, and $X_v$ be some child of $L_v$.
Then any set which t-separates $\{ i, j \}$ and $\{ u, v \}$ must contain $L_i$ and $L_v$, since $i$ must be separated from $j$ and $i$ must be separated from $v$, respectively. Therefore, by \rref{thm:trek-separation}, $\rank(\Sigma_{[ij],[uv]}) = 2$ generically, i.e., $t_{ij,uv} \neq 0$ generically, which completes the proof.
\end{proof}

\begin{prop}
\rref{assumption:latent-adjacency-faithfulness} holds generically.
\end{prop}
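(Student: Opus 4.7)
The plan is to reduce the non-vanishing of the partial correlation to a rank question on a single submatrix of $\Sigma$, and then invoke the trek separation theorem. Writing $C = S_i \cup S'$ for the conditioning set and using the standard Schur-complement identity
\[
\det\!\bigl(\Sigma_{\{i\}\cup C,\,\{k'\}\cup C}\bigr) \;=\; \det(\Sigma_{CC})\,\bigl(\Sigma_{i k'} - \Sigma_{i C}\Sigma_{CC}^{-1}\Sigma_{C k'}\bigr),
\]
we see that $\rho_{i,k' \mid C} = 0$ if and only if this $(|C|+1)\times(|C|+1)$ submatrix is rank-deficient (generically $\det(\Sigma_{CC}) \neq 0$). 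It therefore suffices to exhibit some $X_{k'} \in \ch(L_k)$ for which this submatrix has full rank generically; by \rref{thm:trek-separation}, this reduces to the purely combinatorial claim that the minimum t-separator size for $A = \{i\}\cup C$ and $B = \{k'\}\cup C$ is exactly $|C|+1$.

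First I would fix an arbitrary $X_{k'} \in \ch(L_k)$ and verify that $X_i$, $L_k$, and $X_{k'}$ all lie outside $C$: the definition $S_i = \ch(\pa(X_i))\setminus\{i\}$ excludes $X_i$ and contains only siblings of $X_i$, which sit in a different cluster from $X_{k'}$; the set $S'$ collects observed children of latents preceding $L_{\pa(X_i)}$, so it contains neither $X_{k'}$ (whose cluster $L_k$ is a descendant of $L_{\pa(X_i)}$) nor $L_k$ (which is latent). Next, assuming for contradiction a t-separator $(D_1, D_2)$ with $|D_1|+|D_2| \leq |C|$ exists, the trivial treks $(c,c)$ for $c \in C \subseteq A \cap B$ force $c \in D_1 \cup D_2$, giving $|D_1 \cup D_2| \geq |C|$; combined with $|D_1|+|D_2| \leq |C|$ this pins down $D_1 \cap D_2 = \emptyset$ and $D_1 \cup D_2 = C$ exactly.

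Finally, consider the trek $(X_i,\; X_i \to L_k \to X_{k'})$ with source $X_i$, left sink $X_i \in A$, and right sink $X_{k'} \in B$: to be t-separated it requires $X_i \in D_1$ or $\{X_i, L_k, X_{k'}\}\cap D_2 \neq \emptyset$, yet all three of these nodes sit outside $C = D_1 \cup D_2$, the desired contradiction. Hence the minimum t-separator has size $|C|+1$, so by \rref{thm:trek-separation} the submatrix is generically of full rank and $\rho_{i, k' \mid C} \neq 0$ generically. The only subtle point I anticipate is pinning down exactly which nodes belong to $C$ given the mild notational ambiguity in the statement of $S'$; once $X_i$, $L_k$, and $X_{k'}$ are confirmed to lie outside $C$, the trek-separation bookkeeping is a one-line combinatorial contradiction and requires no parameter-level computation.
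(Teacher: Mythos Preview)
Your argument is correct, but it takes a different route from the paper. The paper's proof is a two-line appeal to an external result: it observes that $X_i$ and any $X_k \in \ch(L_k)$ are d-connected given the conditioning set (via the path $X_i \to L_k \to X_k$, which contains no colliders and no conditioned nodes), and then cites \cite{spirtes2000causation} for the fact that in a linear SEM, d-connection implies generically nonzero partial correlation. You instead unpack this black box: you convert the partial-correlation question into a rank question via the Schur complement, and then bound the rank directly using \rref{thm:trek-separation}, which is already stated in the paper. Both arguments ultimately rest on the same combinatorial observation---the trek $(X_i;\,X_i \to L_k \to X_{k'})$ avoids the conditioning set---so the difference is one of packaging: the paper's version is shorter but relies on an outside citation, while yours is self-contained within the paper's own toolkit at the cost of a little more bookkeeping. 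Your caveat about the notational ambiguity in the definition of $S'$ is well placed; the paper's statement of the assumption does have a typo there, and the intended meaning has to be inferred from \rref{alg:learnDAG}.
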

\begin{proof}
If $X_i \to L_k$, then $X_i$ and $X_k$ are d-connected given $S_i, S$ for any $X_k \in \ch(L_k)$.
\cite{spirtes2000causation} establish that if two nodes are d-connected, then their partial correlation in a linear SEM is generically nonzero, proving the desired result.
\end{proof}

\section{Proof of \rref{thm:consistency}}\label{appendix:algorithm-consistency}

\consistency*
\begin{proof}
By \rref{assumption:cluster-tetrad-faithfulness} and \rref{assumption:parent-tetrad-faithfulness}, as long as at least two nodes are present from each cluster, if $t_{ij,uv} = 0$ for all pairs $u, v \in \{ i, j \}$ iff. $X_i$ and $X_j$ have the same latent parent, and neither node has any children.
Thus, if $i - j$  in $\cG$ in \rref{alg:find-bottom}, then $i$ and $j$ are in the same cluster.
Next, if $i$ and $j$ are both left in $R$ after the while loop, then they must be in the same cluster.
For sake of contradiction, suppose not, and let $L_i = \pa(X_i)$, $L_j = \pa(X_j)$.
Since $X_i$ remains, then by the double-parent assumption, there must also remain some other node $X_{i'}$ that is a child of $L_i$.
Similarly, there must remain some other node $X_{j'}$ that is a child of $L_j$.
However, then $|R| = 4$, a contradiction.
Therefore, the clustering output by \rref{alg:find-bottom} is a refinement of the true clustering.
Furthermore, if a node $i$ is upstream of the cluster $C_1$, then $i$ necessarily has a child, and thus $t_{ij,uv} \neq 0$ for some $j, u, v$.
Therefore, $i$ cannot be placed in any clique before the cluster $C_1$ is completely removed, and thus the ordering of cluster returned by \rref{alg:find-bottom} is topologically consistent.
By the double parent assumption, each cluster in $\pi$ from \rref{alg:find-bottom} has size at least 2. 
\rref{assumption:cluster-tetrad-faithfulness} ensures that two clusters are merged by \rref{alg:mergeCluster} iff. they have the same latent parent.
Finally, \rref{assumption:parent-tetrad-faithfulness} ensures that $X_i \to L_k$ in $\hat{\cG}$ if and only if $X_i \to L_k$ in $\cG$.
\end{proof}

\end{document}